\documentclass[letterpaper, 10 pt, conference]{ieeeconf}  

\IEEEoverridecommandlockouts                              

\usepackage{graphicx}     
\usepackage{color}
\usepackage{courier}
\usepackage{algorithm}
\usepackage[noend]{algpseudocode}
\usepackage{amsmath}
\usepackage{amssymb}
\usepackage{amsfonts}
\usepackage{array}    
\usepackage{siunitx}

\usepackage{float}
\usepackage{tikz}
\usetikzlibrary{arrows,%
                plotmarks, calc}
\usetikzlibrary[positioning, calc, intersections, backgrounds]      

\usepackage{xspace}
\usepackage{soul}
\usepackage{mathtools}
\usepackage{makecell}
\usepackage{cite}
\usepackage{nicefrac}
\usepackage{float}
\usepackage{wrapfig}
\usepackage[utf8]{inputenc}
\usepackage{bbm}

\newcommand{\ra}{\rightarrow}

\newtheorem{asm}{Assumption}
\newtheorem{lem}{Lemma}
\newtheorem{thm}{Theorem}
\newtheorem{rem}{Remark}

\newcommand{\ie}{\unskip, i.\,e.,\xspace}

\newcommand{\R}{\ensuremath{\mathbb{R}}}

\newcommand{\X}{\ensuremath{\mathbb{X}}}

\newcommand{\U}{\ensuremath{\mathbb{U}}}

\newcommand{\sm}{\ensuremath{\setminus}}
\newcommand{\set}[1]{\ensuremath{\mathbb{#1}}}

\newcommand{\eps}{\ensuremath{\varepsilon}}

\newcommand{\ball}{\ensuremath{\mathcal B}}
\DeclareMathOperator*{\argmin}{arg\,min}

\newcommand{\norm}[1]{\left\lVert#1\right\rVert}

\newcommand{\abs}[1]{\left\lvert#1\right\rvert}
\newcommand{\scal}[1]{\left\langle#1\right\rangle}

\definecolor{dgreen}{rgb}{0.0, 0.5, 0.0}

\makeatletter
\newcommand{\subalign}[1]{%
	\vcenter{%
		\Let@ \restore@math@cr \default@tag
		\baselineskip\fontdimen10 \scriptfont\tw@
		\advance\baselineskip\fontdimen12 \scriptfont\tw@
		\lineskip\thr@@\fontdimen8 \scriptfont\thr@@
		\lineskiplimit\lineskip
		\ialign{\hfil$\m@th\scriptstyle##$&$\m@th\scriptstyle{}##$\crcr
			#1\crcr
		}%
	}
}

\title{\LARGE \bf Time-varying optimal control under measurement errors}

\author{Patrick Schmidt, Stefan Streif
        \thanks{The authors are with the Chemnitz University of Technology, Professorship Automatic Control and System Dynamics, 09107 Chemnitz
        {\tt\small $\{$patrick.schmidt, stefan.streif$\}$@etit.tu-chemnitz.de} \newline
        $\copyright$ 2026 the authors. This work has been accepted at the ECC for
publication under a Creative Commons Licence CC-BY-NC-ND.
        }
        }
\makeatletter
\newcommand*\bigcdot{\mathpalette\bigcdot@{.5}}
\newcommand*\bigcdot@[2]{\mathbin{\vcenter{\hbox{\scalebox{#2}{$\m@th#1\bullet$}}}}}
\makeatother

\begin{document}
\maketitle
\thispagestyle{empty}
\pagestyle{empty}

\begin{abstract}
	Solving optimal control problems to determine a stabilizing controller involves a significant computational effort.
	Time-varying optimal control provides a remedy by designing a tracking system, given as an ordinary differential equation, to track the solution of the optimal control problem. 
	To improve the applicability of the method, measurement errors are considered in this paper and it is described how these errors influence a control Lyapunov function-based decay condition.
	As a result of these investigations, input-affine constraints that meet the standard formulation and that describe the set of admissible controls are obtained. 
	The paper also derives a requirement on the necessary measurement accuracy as well as a triggering condition for taking a new measurement.
	The main theorem combines these results into a robustly stabilizing control algorithm, meaning that all closed-loop trajectories starting in a vicinity around the true state converge to zero.
	Additionally, the tracking system ensures that the optimal control is tracked at the end of each sampling period.
	The effectiveness of this approach is demonstrated using a train acceleration model and the well-known predator-prey model.	
\end{abstract}

\section{Introduction} \label{sec:intro}

Over the last few decades, many different methods have been developed to determine stabilizing controls.
They all have different performance characteristics, meaning some ensure fast convergence, while others ensure small control inputs.
To choose the best-performing method, the user can rank the different approaches using a user-defined performance or objective function.
An alternative to manually selecting the best-performing method is considering an optimal control approach, such as model predictive control (MPC) \cite{Boccia2014-stability}.
These approaches minimize a given objective function to determine the optimal control. 
However, optimization-based approaches have the disadvantage of an increased computational burden \cite{Grune2017-NMPC}.
Additionally, optimization inaccuracies (cf. \cite{Schmidt2021-inf}) might impair stabilization.

One approach that avoids the computational burden is time-varying optimal control (TVOC) \cite{Fazlyab2017-prediction, Simonetto2020-time-varying-survey}.
In TVOC, a tracking system given in the form of an ordinary differential equation is designed.
Its solution converges to the optimal solution of a relaxed objective function.
The latter is obtained by incorporating constraints via barrier functions into the objective function, which is strongly convex and considers the states and inputs at the current time instead of a finite or infinite time horizon.
The tracking system consists of a tracking part and a convergence part.
The latter can be freely designed, e.g., to obtain fixed-time convergence \cite{Polyakov2011-nonlinear-feedback}, which means that the optimal solution is obtained after a user-defined time.
Since the underlying optimization problem is not solved, nonexact optimization has no influence on the result.

This paper considers measurement errors which cause a discrepancy between the real output of the system and the measurement.
Since the error propagates to the estimated states, it is simply denoted as measurement error from now on.
Many works assume accurate measurements since most of the stabilization methods work if the errors are \emph{small enough}.
Additionally, \cite{Sontag1999-stability-disturb} claimed that for a continuous state-feedback control $\kappa(x)$, the behaviors of the two systems, one under $\kappa(x + e)$ with measurement error $e$, and the other under $\kappa(x)$ are close to each other.
In the context of TVOC, this work addresses the question of whether a measurement error is \emph{small enough} by deriving a sufficient bound on the measurement error to ensure stabilization.
Additionally, it derives a triggering rule that determines when new measurements need to be taken.
This approach guarantees the existence of a control that stabilizes all closed-loop trajectories starting in a neighborhood of the measured state, which includes also the true state.


The rest of the paper is structured as follows:
The problem setting and the procedure of the paper are described in Section \ref{sec:prob-setting}.
Section \ref{sec:method} recaps results from previous papers, like TVOC with its tracking system and it derives input-affine constraints that describe how measurement errors affect the set of admissible controls.
Additionally, the main theorem on stabilization in the presence of measurement errors is considered. 
The presented algorithm is applied to two examples in Section \ref{sec:case-study}, before the paper is concluded in Section \ref{sec:concl}.

Notation:
The scalar product of two vectors $x, y \in \R^n$ is denoted as $\scal{x,y}$. 
The Jacobian of a function $V$ is denoted as $\nabla_x V(x)$ and the Hessian as $\nabla_{xx} V(x)$. 
The set $\ball_{\eps}(\hat x)$ is defined as $\ball_{\eps}(\hat x) := \{ x \in \R^n: \norm{x - \hat x} \leq \eps \}$.
The closed-loop trajectory under a feedback $\kappa$ starting at $t = t_0$ in $x(t_0) = x_0$ is denoted as $x_\kappa(t; t_0, x_0)$.
Analogously, for a set of initial values, the set of closed-loop trajectories is defined as $x_\kappa(t; t_0, \ball_{\eps}(\hat x_0)) := \{ x_\kappa(t; t_0, x_0): x_0 \in \ball_{\eps}(\hat x_0) \}$.
Considering a set of states $\X \subset \R^n$, $\varphi(u,\X) \leq 0$ means that an inequality $\varphi(u,x) \leq 0$ has to hold for each $x \in \X$.

\section{Problem setting} \label{sec:prob-setting}

The input-affine system 
\begin{equation} \label{eqn:dynamics-f}
	\dot x = f(x) + g(x) u, \ x(0) = x_0
\end{equation}
with states $x \in \R^n$, controls $u \in \R^m$, and initial condition $x(0) = x_0$ is considered in the following sections.
The internal dynamics $f: \R^n \ra \R^n$ and the input coupling function $g: \R^n \ra \R^{n \times m}$ are assumed to be continuously differentiable in $x$.

The objective is to stabilize \eqref{eqn:dynamics-f} at an equilibrium point, without loss of generality at zero, where the control has to be optimal w.r.t. an objective function.
This task results in the following optimal control problem:
\begin{equation}
	\label{eqn:opt-prob-initial}
	\begin{split}
		\min_{u \in \U} \quad&J(u,x)  \\
		\mathrm{s.t.}\quad& \dot x = f(x) + g(x)u, \\
		&\scal{\nabla_x V(x), f(x) + g(x) u} \leq -w(x)
	\end{split}
\end{equation}
with initial value $x(0) = x_0$.
The set of input constraints $\U := [u_{1,\min}, u_{1, \max}] \times [u_{2,\min}, u_{2, \max}] \times  \ldots \times [u_{m,\min}, u_{m, \max}] \subset \R^m$ is assumed to be a compact set with $\{ 0 \} \in \U$.
The objective function $J: \R^m \times \R^n \ra \R$ is assumed to satisfy the following mild assumption (cf. \cite{Fazlyab2016-self-triggered}, \cite{Schmidt2021-tracking}, \cite{Simonetto2016-pred-corr}):
\begin{asm} \label{asm:J-twice-differentiable-J-uni-str-convex}
	The objective function $J(u,x)$ in \eqref{eqn:opt-prob-initial} is twice continuously differentiable in both arguments and is uniformly strongly convex in $u$ with parameter $m_J$ for all $x \in \R^n$ \ie $\nabla_{uu} J(u,x) - m_J I$ is positive semidefinite for all $u \in \R^m$ and all $x \in \R^n$.
	\hfill $\square$	
\end{asm}

The following assumption ensures the existence of a stabilizing control $u$ for \eqref{eqn:dynamics-f}.
\begin{asm} \label{asm:decay-condition}
	There exists a continuously differentiable feedback controller $\kappa(x)$ and a twice continuously differentiable radially unbounded Lyapunov function $V$ with $\alpha_1(x) \leq V(x) \leq \alpha_2(x)$ \cite{Khalil2002-nonlin-sys} such that for all $x \in \R^n$, it holds that 
	\begin{equation} \label{eqn:decay-condition}
		\dot V(x) = \scal{\nabla_x V(x),  f(x) + g(x) \kappa(x)} \leq - w(x), 
	\end{equation}
	where $w: \R^n \ra \R$ is continuously differentiable and satisfies $w(0) = 0$ and $x \not = 0 \implies w(x) > 0$.
		\hfill $\square$
\end{asm}
The inequality in \eqref{eqn:decay-condition} is written for an arbitrary $u$ as
\begin{equation} \label{eqn:varphi}
	\varphi(u,x) := \scal{\nabla_x V(x), f(x) + g(x) u} + w(x) \leq 0.
\end{equation}

Solving \eqref{eqn:opt-prob-initial} requires the true state, which can be determined by a measurement. 
Measuring the output commonly comes along with a measurement error which propagates through the state such that there exists a discrepancy between the true state and its measurement in the following way:
\begin{asm} \label{asm:measuring-procedure}
	For a true state $x_k := x(t_k)$ and its measurement $\hat x_k := \hat x(t_k)$, it holds that $\hat x_k \in \ball_\eps(x_k)$. 
	Since $x_k$ is unknown, also the center of the ball $\ball_\eps(x_k)$ is unknown.
	Therefore, $\ball_{2\eps}(\hat x_k)$ is considered since the true state $x_k$ is located in this set.
		\hfill $\square$
\end{asm}

Even if the measurement error is small, it can effect that two closed-loop trajectories starting at $\hat x_k$ and $x_k$ diverge under a control $\kappa(\hat x_k)$ \cite{Sontag1999-stability-disturb}.
At $t = t_k$, it is desired to determine a control $u_{k}(t_k)$ that stabilizes \eqref{eqn:dynamics-f} in the sense that 
\begin{equation} \label{eqn:robust-at-measurement}
	\varphi(u_{k}(t_k), \ball_{2\eps}(\hat x_k)) \leq 0
\end{equation}
holds, i.e., $u_{k}(t_k)$ robustly stabilizes the given system in the sense that all states $x_k$ that produce the measurement $\hat x_k$ are stabilized by the same input.
Additionally, $u_{k}$ has to be determined for a time interval, say $[t_k, t_{k+1})$, such that after applying $u_{k}(t)$ for $t \in [t_k, t_{k+1})$ all closed loop trajectories starting in $\ball_{2\eps}(\hat x_k)$ fulfill
\begin{equation} \label{eqn:robust-time-after-measurement}
	\forall t \in [t_k, t_{k+1}): \ \varphi(u_{k}(t), x_{u_{k}}(t; t_k, \ball_{2\eps}(\hat x_k))) \leq 0.
\end{equation}

The first question that arises is whether such a control $u_k$ can be always found.
The second question is when a new measurement has to be taken. 
These questions are answered further below in this paper.

Due to the presence of measurement errors, the goal of asymptotically stabilizing \eqref{eqn:dynamics-f} is given up in favor of practical stabilization, meaning that the closed-loop trajectories remain in some vicinity around the equilibrium point \cite{Schmidt2021-inf}.
This yields the following procedure for stabilization:
\newline
\textbf{Stabilization procedure and definition of balls:}
Based on the initial value $\hat x_0$, a starting ball $\ball_R$ with $R := \norm{\hat x_0} + 2\eps$ is defined.
A stabilizing control is determined such that the value of the control Lyapunov function (CLF) decreases for all future states, ensuring that all closed-loop trajectories remain within an overshoot ball $\ball_{R^\star}$ (see \eqref{eqn:robust-at-measurement} and \eqref{eqn:robust-time-after-measurement}).	 
The radius of this ball is given by $R^\star := \alpha_1^{-1}(\hat V)$, where $\hat V = \max_{\norm x \leq R} V(x)$ denotes the maximum value of the CLF within $\ball_R$. 
Stabilization is guaranteed until the closed-loop trajectories reach the core ball $\ball_{r^\star}$.
Within this region, an arbitrary control is applied as long as the closed-loop trajectories remain inside $\ball_r$, where $r^\star < \alpha_2^{-1}(\alpha_1(r)) - 2 \eps$ holds.
	
Details of this stabilization procedure are clarified in the next section, which presents time-varying optimal control as a method to solve \eqref{eqn:opt-prob-initial}.
Afterwards, the effect of the measurement error on the decay condition is presented.
As a last point in that section, the resulting relaxed optimal control problem including measurement errors is solved with TVOC.
		
\section{Method} \label{sec:method}

\subsection{Time-varying optimal control}

It is desired to solve the optimal control problem \eqref{eqn:opt-prob-initial} with methods from TVOC.
This technique is based on a tracking system 
\begin{equation} \label{eqn:tracking-system-generic}
	\dot u(t) = h(u,x,t), \ u(t_0) = \kappa(x_0) 
\end{equation}
that determines the solution of \eqref{eqn:opt-prob-initial} by simply solving the ODE \eqref{eqn:tracking-system-generic} instead of solving the OCP.
Assumption \ref{asm:J-twice-differentiable-J-uni-str-convex} ensures that such a tracking system can be designed in the sense that the tracking error $e_{\mathrm T}(t) := \norm{\tilde u(t) - u^\star(t)}$ is zero after a user-specified time $\tau > 0$.
The tracking error depends on the solution $\tilde u(t; t_0, \kappa(x_0))$ of \eqref{eqn:tracking-system-generic} as well as the solution of the relaxed optimal control problem $u^\star$, which is obtained in the following way:
The inequality constraint \eqref{eqn:varphi} is relaxed to $\varphi(u,x) \leq \gamma$ for some $\gamma > 0$, which has to hold for all $x \in \R^n$ and all $u \in \U$.
The relaxed inequality constraint is included in the objective function of \eqref{eqn:opt-prob-initial} via a barrier function (cf. \cite{Fazlyab2017-prediction}), i.e., a twice continuously differentiable convex function $B: \R_{\leq 0} \ra \R \cup \{ \infty \}$, where $B(z) < \infty$ holds for all $z < 0$ and $\lim\limits_{z \ra 0^-} B(z) \ra \infty$.
The input constraints $\U$ are also included in the objective function by defining the respective vector of input constraints with $\varphi_{\U,i}(u) = u_{i, \min} - u_i$ and $\varphi_{\U,m+i}(u) = u_i - u_{i, \max}$ for $i \in 1, 2, \ldots, m$.

Finally, this approach results in the following relaxed optimal control problem:
\begin{equation}
	\label{eqn:opt-prob-zwischen}
	\begin{split}
		\min_{u \in \R^m} \quad& \tilde J(u,x,t) \\
		\mathrm{s.t.}\quad& \dot x = f(x) + g(x) u,
	\end{split}
\end{equation}
with the relaxed objective function
\begin{equation} \label{eqn:dfn-of-J-tilde-zwischen}
	\tilde J(u,x,t) := J(u,x) + \mu(t)^\top B \left( \begin{pmatrix}
		\varphi_\U(u) \\
		\varphi(u,x)		
	\end{pmatrix} - \gamma_{2m+1} \right),
\end{equation}
where $\gamma_{2m+1} := \begin{pmatrix} \gamma & \ldots & \gamma \end{pmatrix}^\top \in \R^{2m+1}$.
Additionally, $\mu(t) := \begin{pmatrix} \mu_1(t) & \mu_2(t) & \ldots & \mu_{2m+1}(t) \end{pmatrix}^\top$ is a continuously differentiable function satisfying $\mu_i(t) > 0$ and $\dot \mu_i(t) < 0$ for all $i = 1, 2, \ldots, 2m+1$ and all $t \in \R_{\geq 0}$.
The barrier function in \eqref{eqn:dfn-of-J-tilde-zwischen} is also defined component-wise, i.e., $B(z) = \begin{pmatrix} B(z_1) & B(z_2) & \ldots & B(z_{2m+1}) \end{pmatrix}^\top$ for $z \in \R^{2m+1}$.

Since the objective function $\tilde J(u,x,t)$ is given as the sum of the strongly convex function $J(u,x)$ and the convex functions $\mu_i(t) B(\varphi_{\U,i}(u) - \gamma)$ and $\mu_{2m+1}(t) B(\varphi(u,x) - \gamma)$, it is again strongly convex with parameter $m_J$.
Therefore, \eqref{eqn:opt-prob-zwischen} has a unique solution for all $t \geq 0$ given as
\begin{equation} \label{eqn:opt-prob-zwischen-sol}
	\begin{split}
		u^\star(t) := \argmin_{u \in \R^m} \quad& \tilde J(u,x,t) \\
		\mathrm{s.t.}\quad& \dot x = f(x) + g(x) u
	\end{split}
\end{equation}
and it can be determined with the standard approach of TVOC, where a tracking system is used to track \eqref{eqn:opt-prob-zwischen-sol}.

\begin{lem}[Theorem 1 in \cite{Schmidt2021-tracking}]
	Consider system \eqref{eqn:dynamics-f} and optimization problem \eqref{eqn:opt-prob-zwischen}.
	Let Assumptions \ref{asm:J-twice-differentiable-J-uni-str-convex} and \ref{asm:decay-condition} hold and let $\tau$ be given.
	Define the tracking system
	\begin{equation} \label{eqn:special-choice-h}
		\begin{aligned}	
			\dot u(t) = &- \nabla_{uu}^{-1} \tilde J(u, x, t) \left[ \Psi(\nabla_u \tilde J(u, x, t); \tau) + \nabla_{ut} \tilde J(u, x, t) \right. \\
			&+ \left.\nabla_{ux} \tilde J(u, x, t)^\top (f(x) + g(x) u) \right] =: h(u,x,t; \tau), \\
			&u(0) = \kappa(x_0)
		\end{aligned}
	\end{equation}
	with $\Psi: \R^m \ra \R^m$ as
	\begin{equation} \label{eqn:G-fixed-time-functions-vector}
		\begin{aligned}
			&\Psi(u; \tau) = \begin{pmatrix}
				\psi(u_1; \tau) & \psi(u_2; \tau) & \cdots & \psi(u_m; \tau)
			\end{pmatrix}^\top, \\
			&\psi(u_i; \tau) = \frac \pi \tau (\abs{u_i}^{0.5} + \abs{u_i}^{1.5}) \text{sign}(u_i).
		\end{aligned}
	\end{equation}
	Tracking system \eqref{eqn:special-choice-h} ensures that the tracking error $e_{\mathrm T}(t) := \norm{\tilde u(t) - u^\star(t)}$ is zero for all $t \geq \tau$, where $\tau$ is known as the user-defined settling time.
	Additionally, $\tilde u$ is feasible such that $\varphi_{\U,i}(\tilde u(t)) \leq \gamma$ and $\varphi(\tilde u(t), x_{\tilde u}(t; t_0, x_0)) \leq \gamma$ holds for all $t \geq 0$.
	\hfill $\square$
\end{lem}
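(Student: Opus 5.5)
The plan is a Lyapunov-type argument on the gradient of the relaxed objective. Define the error variable $z(t) := \nabla_u \tilde J(\tilde u(t), x(t), t)$ along the closed-loop solution of \eqref{eqn:dynamics-f} driven by $\tilde u$. Because $\tilde J$ is strongly convex in $u$ with parameter $m_J$, the Hessian $\nabla_{uu}\tilde J$ is invertible wherever the barrier terms are finite, so \eqref{eqn:special-choice-h} is well defined there. The unique minimizer $u^\star(t)$ of \eqref{eqn:opt-prob-zwischen-sol} is characterized by $\nabla_u \tilde J(u^\star(t),x(t),t)=0$. Strong convexity then gives
\[
m_J \norm{\tilde u(t) - u^\star(t)} \leq \norm{z(t)},
\]
so it suffices to show that $z(t)=0$ for all $t\geq\tau$.

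\textbf{Step 1: dynamics of $z$.} Differentiating with the chain rule gives
\[
\dot z = \nabla_{uu}\tilde J\,\dot u + \nabla_{ut}\tilde J + \nabla_{ux}\tilde J^\top (f(x)+g(x)\tilde u).
\]
Inserting $\dot u = h(\tilde u,x,t;\tau)$ from \eqref{eqn:special-choice-h}, everything cancels except $\dot z = -\Psi(z;\tau)$. This system is decoupled componentwise into $\dot z_i = -\frac{\pi}{\tau}(\abs{z_i}^{0.5}+\abs{z_i}^{1.5})\,\text{sign}(z_i)$.

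\textbf{Step 2: fixed-time convergence.} For each component I take $W_i = \abs{z_i}$. Substituting $s=\abs{z_i}^{0.5}$ gives the settling time bound
\[
T \leq \int_0^\infty \frac{ds}{s^{0.5}+s^{1.5}} = \int_0^\infty \frac{2\,d\sigma}{1+\sigma^2} = \pi.
\]
After the scaling by $\pi/\tau$, every component reaches zero no later than $\tau$, independently of the initial value. This is the standard Polyakov fixed-time argument \cite{Polyakov2011-nonlinear-feedback}. Since zero is an equilibrium and the vector field pushes toward it, $z_i$ stays at zero afterwards. Hence $e_{\mathrm T}(t)=0$ for $t\geq\tau$.

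\textbf{Step 3: feasibility.} This is the main obstacle. It requires showing that $\tilde u$ never leaves the domain where all barrier arguments are negative, i.e. that $\varphi_{\U,i}(\tilde u)-\gamma<0$ and $\varphi(\tilde u,x)-\gamma<0$ hold for all $t\geq 0$. This also guarantees existence of the solution on all of $[0,\infty)$. First I check the initial point: $u(0)=\kappa(x_0)$ gives $\varphi(\kappa(x_0),x_0)\leq 0<\gamma$ by Assumption \ref{asm:decay-condition}. I would also need $\kappa(x_0)\in\U$, or the relaxation to cover this, so that the input-constraint arguments are negative as well.

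Next, Step 1 shows that $\norm{z(t)}$ is nonincreasing, so $\norm{z(t)}\leq \norm{z(0)}<\infty$. If some barrier argument approached $0^-$ in finite time, then, since $\mu_i(t)>0$ is bounded below on compact intervals and $B'\to\infty$ near the boundary, $\nabla_u\tilde J$ would blow up. The term $\nabla_u J$ stays bounded because $\tilde u$ and $x$ remain bounded on compact intervals. The gradient of the offending barrier term points along a fixed nonzero direction, namely $-e_i$, $e_i$, or $g(x)^\top\nabla_x V(x)$. Blow-up along that direction would contradict the boundedness of $z$.

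The delicate point is that $g(x)^\top\nabla_x V(x)$ could be zero, or several barriers could blow up with cancelling contributions. I would first argue that the input barriers and the decay barrier cannot cancel, because $\U$ is a box and the sublevel sets are convex. If that fails, I would restrict to a maximal interval of existence and argue by contradiction using convexity of $\tilde J$. Concretely, $\tilde J(\tilde u,x,t)$ minus its minimum is controlled by $\norm{z}^2/m_J$ only when the minimum value is finite. That holds because $\kappa(x)$ is strictly feasible for the relaxed constraint, which makes the minimum finite, and a finite value of $\tilde J$ forces the barrier arguments to stay strictly negative.
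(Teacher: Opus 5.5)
Your proposal is the standard argument behind Theorem~1 of \cite{Schmidt2021-tracking}, which is all the paper offers as proof: the tracking law is built so that $z=\nabla_u\tilde J$ satisfies $\dot z=-\Psi(z;\tau)$, the predefined-time integral gives $z(t)=0$ for $t\geq\tau$, strong convexity turns this into $e_{\mathrm T}(t)=0$, and the barriers supply feasibility. In Step~3, drop the blow-up-direction argument, since convexity of the box and of the sublevel sets does not rule out cancelling blow-ups, which can occur when the relaxed feasible set shrinks to empty; keep only the bound $\tilde J(\tilde u(t),x(t),t)\leq \tilde J(\kappa(x(t)),x(t),t)+\norm{z(0)}^2/(2m_J)$ on the maximal existence interval (with $x$ bounded there because $\dot V<\gamma$), and note that this bound needs $\kappa(x(t))\in\U$ for all $t$, not only at $t=0$ --- an assumption the statement leaves implicit, as the remark after Lemma~\ref{thm:TVOC-under-ME-k} concedes that $\kappa$ need not be admissible.
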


\begin{proof}
	See \cite[Theorem 1]{Schmidt2021-tracking}
\end{proof}
The input affine constraints $\varphi_\U(u) \leq 0$ and $\varphi(u,x) \leq 0$ in \eqref{eqn:dfn-of-J-tilde-zwischen} describe a set of admissible controls.
The next section explains how a measurement error changes this set.
Once this change is expressed through input-affine constraints, they can also be incorporated into the objective function with the aim to apply TVOC to the resulting problem.

\subsection{Effect of non-exact measurement on input constraints}

Choosing an input to fulfill \eqref{eqn:robust-at-measurement} and \eqref{eqn:robust-time-after-measurement} ensures that the closed-loop trajectories are located in the set $\set V := \ball_{\hat R^\star}$, where $\ball_{\hat R^\star}$ is a compact set given in the definition of the balls in Section \ref{sec:prob-setting}.
Inequality \eqref{eqn:varphi} is input-affine such that it can be written as $\varphi(u,x) = \beta_0(x) + \sum_{i = 1}^m \beta_i(x) u_i$.
All coefficients $\beta_i$ of $\beta(x)$ are Lipschitz continuous on a compact set, which is $\set V$ in this case (see \cite{Schmidt2024-some} for a detailed explanation).
Due to the Lipschnitzness of the coefficients, $\beta_i(\set V) := \{ \beta_i(x) \in \R: x \in \set V  \}$ is a compact interval for each $i = 0, 1, \ldots, m$ and thus, there exists both an upper and a lower bound for each set $\beta_i(\set V)$ given as $\beta_{i,\min}(\hat x) := \beta_i(\hat x) - L_i \eps$ and $\beta_{i,\max}(\hat x) := \beta_i(\hat x) + L_i \eps$.
Therefore, $\beta(x) \in [\beta_{0,\min}(\hat x), \beta_{0,\max}(\hat x)] \times [\beta_{1,\min}(\hat x), \beta_{1,\max}(\hat x)] \times \ldots \times [\beta_{m,\min}(\hat x), \beta_{m,\max}(\hat x)]$ and the infinitely many inequalities $\varphi(u,\ball_{2\eps}(\hat x)) \leq 0$ reduce to $2^{m+1}$ given as 
\begin{equation} \label{eqn:varphi-beta-min-max}
	\begin{split}
		\hat \varphi_j(u, \hat x) = &\left( \beta_0(\hat x) + (-1)^{j} L_0 \eps \right) \\
		&+ \sum_{i = 1}^m \left( \beta_i(\hat x) + (-1)^{\lceil \frac{j}{2^i} \rceil} L_i \eps \right) u_i \leq 0
	\end{split}
\end{equation}
for $j = 1, \ldots, 2^{m+1}$, which are obtained by considering the different combinations of $\beta_{k,\min}(\hat x)$ and $\beta_{\ell,\max}(\hat x)$ for $k, \ell \in \{ 0, 1, \ldots, m \}$ with $k \not = \ell$.
The approach is summarized in Figure~\ref{fig:effect_of_meas_error}.

\begin{figure}
	\centering
	\includegraphics[trim={0cm 0.4cm 0.2cm 0.125cm},width = 0.47\textwidth]{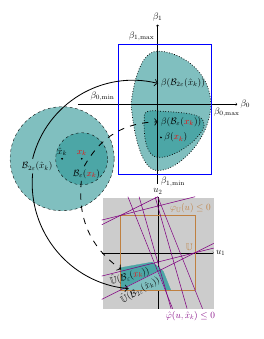}	
	\caption{Illustration of the approach: A measurement $\hat x_k$ is obtained, which is located in $\ball_\eps(x_k)$. 
	Since the unknown current state $x_k$ is unknown, the ball $\ball_{2\eps}(\hat x_k)$ is considered, which contains $\ball_\eps(x_k)$.
	Bounds for $\beta(x_k)$ are derived by considering $\beta(\ball_{2\eps}(\hat x_k))$.
	These bounds can be used to derive the set of admissible controls $\hat{\U}(\ball_{2\eps}(\hat x_k))$.
	Choosing some control that is located in this set ensures a decay for all $x \in \ball_{2\eps}(\hat x_k)$ and thus, also for the true state $x_k$.
	Since $\ball_{\eps}(x_k) \subset \ball_{2\eps}(\hat x_k)$, it holds that $\beta(\ball_{\eps}(x_k)) \subset \beta(\ball_{2\eps}(\hat x_k))$ and also $\hat{\U}(\ball_{2\eps}(\hat x_k)) \subset \hat{\U}(\ball_{\eps}(x_k))$.
	The set of admissible controls is required to incorporate measurement error in time-varying optimal control.}
	\label{fig:effect_of_meas_error}
\end{figure}

\subsection{Determining a triggering condition}

Referring to Section \ref{sec:prob-setting}, two questions need to be answered.
First, if there exists always a control $u_k$ such that \eqref{eqn:robust-time-after-measurement} is fulfilled and second, when a new measurement is required.

The first question is answered in \cite[Theorem 2]{Schmidt2024-some} and provides a statement when a measurement error is \emph{small enough}, meaning that the set of admissible controls (see Figure~\ref{fig:effect_of_meas_error}) is non-empty at each measurement.
That work derives a requirement on the measurement error, which is assumed to hold in this paper:

\begin{asm} \label{asm:eps-small-enough}
	The measurement error $\eps$ satisfies $\eps < \eps_{\min}$ with 
	\begin{equation} \label{eqn:lower-bound-bar-eps}
		\begin{split}
			&\eps_{\min} = \frac 1 2 \frac{\bar w}{L_0 + \sum \limits_{i = 1}^m L_i u_{i, \text M}}, \\
			&\bar w := \min_{x \in \set V \sm \ball_{r^\star}} w(x) - \tilde w(x), \\
			&u_{i, \text M} := \max\{ \abs{u_{i, \min}}, u_{i, \max} \},
		\end{split}
	\end{equation}	
	where $\tilde w$ is a relaxation of $w$, e.g., $\tilde w = \alpha w$, $\alpha \in (0,1)$.
		\hfill $\square$
\end{asm}
Both the relaxation of the decay as well as the radius of the core ball $r^\star$ can be determined by the user.
The existence of a robustly stabilizing control highly depends on the given $\eps$ since the measurement error has an influence on the $\beta_{i,\min}(\hat x)$ and $\beta_{i,\max}(\hat x)$.
Especially for measured $\hat x$ close to the origin, even a small $\eps$ has a large influence on the set of admissible controls.
Therefore, the vicinity around zero is excluded by defining a core ball $\ball_{r^\star}$ as a part of the target ball $\ball_r$ (cf. definition of the balls in Section \ref{sec:prob-setting}).
If the trajectory reaches this core ball, no decay can be guaranteed. 
Together with \eqref{eqn:varphi-beta-min-max}, it becomes clear that the smaller the required $\eps$ is, the smaller $r^\star$ has to be.
Additionally, a relaxation of the decay helps to enlarge both, $\eps$ and $r^\star$.
Further explanations of the relaxed decay condition as well as some additional details and explanations are given in \cite{Schmidt2024-some}.
  
The bound in Assumption \ref{asm:eps-small-enough} results from a computation of the maximum measurement error at each point $x \in \set V$ given by the following lemma.
\begin{lem}[Theorem 1 in \cite{Schmidt2024-some}] \label{thm:choice-of-eps} 
	Consider system \eqref{eqn:dynamics-f} together with input box constraints $\U$, where $\{ 0 \} \in \U$.
	Let Assumptions \ref{asm:decay-condition} and \ref{asm:eps-small-enough} hold and let $\tilde w: \R^n \ra \R_{\geq 0}$ be given as the relaxed decay rate, which is continuously differentiable and satisfies $\tilde w(x) \leq w(x)$ for all $x \in \R^n$, $\tilde w(0) = 0$ and $x \not = 0 \implies \tilde w(x) > 0$. 
	Then, 
	\begin{equation} \label{eqn:robustly-stabilizing}
		\begin{split}
			\forall \hat x \in \set V\sm \ball_{r^\star} \ &\exists \bar \eps(\hat x) > 2 \eps, \hat u \in \U: \varphi(\hat u, \ball_{\bar \eps(\hat x)} (\hat x); \tilde w) \leq 0
		\end{split}
	\end{equation}
	holds, where $\varphi(\hat u, x; \tilde w)$ denotes that $\tilde w$ is used instead of $w$ in $\varphi(\hat u, x)$.
	\hfill $\square$
\end{lem}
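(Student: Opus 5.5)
Our approach is to take the nominal stabilizing feedback $\kappa$ from Assumption~\ref{asm:decay-condition}, saturated if necessary, as a candidate $\hat u$, and to use the Lipschitz continuity of the coefficients $\beta_i$ on $\set V$ to show that the decay condition with the relaxed rate $\tilde w$ persists on a whole ball around $\hat x$. Write $\varphi(u,x;\tilde w) = \beta_0(x) - w(x) + \tilde w(x) + \sum_{i=1}^m \beta_i(x) u_i$, where $\beta_0$ contains $w$. For fixed $\hat u$ and any $x \in \ball_{\bar\eps}(\hat x)\cap\set V$, the Lipschitz bounds from the previous subsection give
\begin{equation*}
	\varphi(\hat u, x; \tilde w) \leq \varphi(\hat u, \hat x; w) - \big(w(\hat x) - \tilde w(\hat x)\big) + \Big(L_0 + \sum_{i=1}^m L_i \abs{\hat u_i}\Big)\norm{x-\hat x}.
\end{equation*}
Here we absorb the Lipschitz constant of $w-\tilde w$ into $L_0$, or equivalently treat $\beta_0$ as the coefficient of the relaxed constraint, which is how Assumption~\ref{asm:eps-small-enough} is phrased.

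The key steps are as follows. First, we pick $\hat u\in\U$ with $\varphi(\hat u,\hat x;w)\leq 0$. The natural choice is $\kappa(\hat x)$, and the assumption needed is that $\kappa(\set V)\subset\U$. This is implicit in the setting, because $\U$ is chosen so that stabilization with admissible inputs is possible. Second, since $\abs{\hat u_i}\le u_{i,\text M}$, the error term is bounded by $(L_0+\sum_i L_i u_{i,\text M})\norm{x-\hat x}$. Third, for $\hat x\in\set V\sm\ball_{r^\star}$ we have $w(\hat x)-\tilde w(\hat x)\ge \bar w>0$. We then define
\begin{equation*}
	\bar\eps(\hat x) := \frac{w(\hat x)-\tilde w(\hat x)}{L_0+\sum_{i=1}^m L_i u_{i,\text M}} \;\geq\; \frac{\bar w}{L_0+\sum_{i=1}^m L_i u_{i,\text M}} = 2\eps_{\min} > 2\eps,
\end{equation*}
where the last inequality is Assumption~\ref{asm:eps-small-enough}. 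With this choice the right-hand side of the display above is $\le 0$ for every $x$ in the ball, which is exactly \eqref{eqn:robustly-stabilizing}.

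The main obstacle is the interplay of domains. The Lipschitz constants $L_i$ are valid only on the compact set $\set V$, but $\ball_{\bar\eps(\hat x)}(\hat x)$ may leave $\set V$ when $\hat x$ lies near its boundary. We would handle this in one of two ways. The first option is to enlarge $\set V$ slightly, to $\set V+\ball_{\bar\eps_{\max}}$, which is still compact, and take the Lipschitz constants there. The second option is to restrict the statement to the intersection with $\set V$, which is all that is used afterwards since trajectories stay in $\set V$. A secondary subtlety is guaranteeing that $\hat u$ lies in $\U$ with $\varphi(\hat u,\hat x;w)\le0$. If $\kappa$ is not admissible, we would instead invoke the existence of some admissible input satisfying the nominal decay at $\hat x$, as is implicitly required for \eqref{eqn:opt-prob-initial} to be feasible. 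The rest of the argument is unchanged, because it only used $\abs{\hat u_i}\le u_{i,\text M}$.
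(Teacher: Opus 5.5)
Your argument is correct, but it takes a different route from the paper. The paper gives no proof of its own. It defers to \cite[Theorem 1]{Schmidt2024-some}, where $\bar\eps(\hat x)$ is computed explicitly as the largest admissible error at each point; the formulas for $m=1,2$ are in the appendix. That computation is a case analysis on the signs of $\beta_0(\hat x)$ and $\beta_i(\hat x)$ in the worst-case inequalities \eqref{eqn:varphi-beta-min-max}, with $\hat u$ taken as $0$ or as the vertex of $\U$ that makes each $\beta_i u_i$ most negative. Assumption~\ref{asm:eps-small-enough} is then simply asserted to be a uniform lower bound for that quantity.

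You instead use one witness, $\kappa(\hat x)$, and a first-order perturbation bound, so your radius comes entirely from the relaxation margin $w-\tilde w\geq\bar w$. This buys two things:
\begin{itemize}
\item the proof works for every $m$ without case distinctions;
\item it shows exactly why the factor $\tfrac12$ in $\eps_{\min}$ yields $\bar\eps(\hat x)\geq 2\eps_{\min}>2\eps$, a link the paper states without argument.
\end{itemize}
The cost is conservatism. Your radius uses the worst-case $u_{i,\text M}$ and only the margin created by relaxing the decay. Used in the triggering rule \eqref{eqn:triggering-time-points}, it would therefore give shorter inter-sampling times than the paper's pointwise $\bar\eps(\hat x)$. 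It still gives the uniform positive lower bound $\delta_k\geq(2\eps_{\min}-2\eps)/\bar F$, which rules out Zeno behaviour.

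Both caveats you raise are genuine, and the paper hides them too. First, an admissible input achieving the decay at $\hat x$ must exist, for instance because $\kappa(\set V)\subset\U$. Without it the lemma can fail even though Assumption~\ref{asm:eps-small-enough} holds. A counterexample: $\dot x=x+u$, $V=\tfrac12x^2$, $\kappa(x)=-2x$, $w=x^2$, $\tilde w=\tfrac12 x^2$, $\U=[-0.1,0.1]$. At $\hat x=1$ you get $\varphi(u,1;\tilde w)=1.5+u>0$ for every $u\in\U$.

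Second, the Lipschitz constants must hold on the balls around $\hat x$, not only on $\set V$. Your first remedy, enlarging to $\set V+\ball_{\bar\eps_{\max}}$, is circular as stated, because the enlargement depends on $\bar\eps$, which depends on the constants. To fix it, choose an enlargement radius $\rho>2\eps$ in advance and compute the $L_i$ on $\set V+\ball_\rho$. Read Assumption~\ref{asm:eps-small-enough} with those constants, and set $\bar\eps(\hat x):=\min\{\rho,(w-\tilde w)(\hat x)/(L_0+\sum_i L_i u_{i,\text M})\}$.
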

For an ease of notation, $\varphi(\hat u, x)$ is used as long as it is specified which decay is chosen.
Note that based on the definition of $\eps_{\max}$ in \eqref{eqn:lower-bound-bar-eps}, it holds that $\eps_{\max} \leq \bar \eps(\hat x)$ for all $\hat x \in \set V \sm \ball_{r^\star}$.

\begin{rem}	
	The required measurement accuracy at each point $\bar \eps(\hat x)$ in Lemma \ref{thm:choice-of-eps} was computed in \cite[Theorem 1]{Schmidt2024-some}.
	The cases for $m = 1$ and $m = 2$ are given in the appendix.
\end{rem}

Lemma \ref{thm:choice-of-eps} derives a bound on the maximum measurement error $\bar \eps(\hat x_k)$ for a given $\hat x_k$ to ensure robust stabilization in the sense of \eqref{eqn:robust-at-measurement}.
Since $\eps < \bar \eps(\hat x)$ for all $\hat x \in \set V \sm \ball_{r^\star}$, there exists a time interval of length $\delta_k := t_{k+1} - t_k$ such that the closed-loop trajectories can evolve until the set of admissible controls is empty (see \eqref{eqn:robust-time-after-measurement}).
Since both $\delta_{k}$ and the control $u_k$ are unknown, it is difficult to ensure \eqref{eqn:robust-time-after-measurement} without knowing the closed-loop trajectories.
The derived $\bar \eps(\hat x_k)$ yields a possibility to replace this condition, namely by ensuring $2\eps + \delta_k \bar F \leq \bar \eps(\hat x_k)$ with $\bar F$ as an upper bound on the dynamics given as
\begin{equation}
	\bar F := \sup_{x \in \set V \atop u \in \U} \norm{f(x) + g(x) u}.
\end{equation}	
The length of the time interval is given as $\delta_k := \frac{\bar \eps(\hat x_k) - 2 \eps}{\bar F}$.
Knowing $\bar \eps(\hat x_k)$ also allows to replace \eqref{eqn:robust-time-after-measurement} by $\varphi(u_{k}(t), \ball_{2\eps + \delta_k \bar F}(\hat x_k)) \leq 0$, which is more conservative than \eqref{eqn:robust-time-after-measurement} but easier to evaluate.

\subsection{Relaxed Optimal control problem}

Since the two questions at the beginning of Section \ref{sec:prob-setting} have been answered, a tracking system can be applied to the resulting optimal control problem (OCP) at $t = t_k$ to derive the optimal control $u_k^\star(t)$ for $t \in [t_k, t_{k+1})$.
This OCP reads as
\begin{equation}
	\label{eqn:opt-prob-finally}
	\begin{split}
		\min_{u \in \R^m} \quad& \tilde J(u,\hat x,t) \\
		\mathrm{s.t.}\quad& \dot{\hat x} = f(\hat x) + g(\hat x) u,
	\end{split}
\end{equation}
with the relaxed objective function
\begin{equation} \label{eqn:dfn-of-J-tilde-finally}
	\tilde J(u,\hat x,t) := J(u,\hat x) + \mu(t)^\top B \left( \begin{pmatrix}
		\hat \varphi(u, \hat x) \\
		\varphi_\U(u)
	\end{pmatrix} - \gamma_{2^{m+1}+2m} \right).
\end{equation}
The inequality $\varphi(u, \hat x) \leq 0$ can be neglected since it is fulfilled when $\hat \varphi(u, \hat x) \leq 0$ holds.
Objective function \eqref{eqn:dfn-of-J-tilde-finally} contains the measured states $\hat x$ instead of $x$, which are unknown.
However, based on the described stabilization procedure in Section \ref{sec:prob-setting}, a decay for all states in a vicinity around the measured value is ensured.
Thus, the true state $x$ remains close to $\hat x$ such that by minimizing $\hat x$, the minimum of $J(u, \hat x, t)$ is close to the unknown optimal value of $J(u, x, t)$.

The following lemma proposes a tracking system to determine the solution of \eqref{eqn:opt-prob-finally}.

\begin{lem} \label{thm:TVOC-under-ME-k}
	Consider the optimal control problem \eqref{eqn:opt-prob-finally} with measured initial value $\hat x_k \in \set V\sm\ball_{r^\star}$ and its solution $u_k^\star(t)$.
	Let Assumptions \ref{asm:J-twice-differentiable-J-uni-str-convex}, \ref{asm:decay-condition}, and \ref{asm:eps-small-enough} hold.
	Let the time until the next measurement time point be given as	
	\begin{equation} \label{eqn:triggering-time-points-nur-k}
		\delta_k =  \frac{\bar \eps(\hat x_k) - 2 \eps}{\bar F}	
	\end{equation}
	and define the solution of tracking system (cf. \eqref{eqn:special-choice-h})
	\begin{equation} \label{eqn:special-choice-h-delta}
		\begin{aligned}	
			\dot u(t) = h(u,\hat x,t; \delta_k), \ u(t_k) \in \hat{\U}(\hat x_k)
		\end{aligned}
	\end{equation}		
	as $\tilde u_k(t) := \tilde u_k(t; t_k, u(t_k))$ with $\hat{\U}(\hat x_k) := \{ u \in \U: \hat \varphi(u, \hat x_k) \leq 0 \}$.
	Applying the solution $\tilde u_k(t)$ to \eqref{eqn:dynamics-f} for $t \in [t_k, t_{k+1}), t_{k+1} := t_k + \delta_k$ ensures that the tracking error vanishes at the end of the settling time, i.e. $e_{\mathrm T}(t) := \norm{\tilde u_k(t_{k+1}) - u_k^\star(t_{k+1})} = 0$, and that system \eqref{eqn:dynamics-f} is robustly stabilized in the sense that 
	\begin{equation} \label{eqn:robust-time-after-measurement-u-tilde}
		\forall t \in [t_k, t_{k+1}): \ \varphi(\tilde u_{k}(t), x_{\tilde u_{k}}(t; t_k, \ball_{2\eps}(\hat x_k))) \leq 0.
	\end{equation}
	holds.
	\hfill $\square$
\end{lem}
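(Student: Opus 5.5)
The plan is to split the statement into its two claims, the vanishing tracking error at $t_{k+1}$ and the robust decay \eqref{eqn:robust-time-after-measurement-u-tilde} on $[t_k,t_{k+1})$, and to reduce both to results stated earlier. The tracking claim follows from the tracking result (Theorem 1 in \cite{Schmidt2021-tracking}) applied to \eqref{eqn:opt-prob-finally} on the shifted interval $[t_k,\infty)$, with settling time $\tau=\delta_k$. The decay claim follows from Lemma~\ref{thm:choice-of-eps}, combined with the reachability bound $2\eps+\delta_k\bar F=\bar\eps(\hat x_k)$ that defines $\delta_k$ in \eqref{eqn:triggering-time-points-nur-k}.

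\textbf{Tracking.} First I check that $\tilde J(\cdot,\hat x,t)$ in \eqref{eqn:dfn-of-J-tilde-finally} satisfies the hypotheses of that theorem. Each $\hat\varphi_j$ in \eqref{eqn:varphi-beta-min-max} is affine in $u$, and $B$ is convex and twice continuously differentiable, so each barrier term is convex in $u$. Hence $\tilde J$ is strongly convex with parameter $m_J$ by Assumption~\ref{asm:J-twice-differentiable-J-uni-str-convex}, $\nabla_{uu}\tilde J$ is invertible, and $u_k^\star(t)$ is unique. The initial value must lie strictly inside the domain of the barrier. This holds because $u(t_k)\in\hat{\U}(\hat x_k)$ gives $\hat\varphi(u(t_k),\hat x_k)\le 0<\gamma$ and $\varphi_\U(u(t_k))\le 0<\gamma$, and $\hat{\U}(\hat x_k)$ is nonempty by Lemma~\ref{thm:choice-of-eps}. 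Since $\eps<\bar\eps(\hat x_k)/2$, that lemma yields some $\hat u\in\U$ satisfying all $2^{m+1}$ vertex inequalities. Time-shifting the cited theorem then gives $e_{\mathrm T}(t)=0$ for all $t\ge t_k+\delta_k=t_{k+1}$, and feasibility $\hat\varphi(\tilde u_k(t),\hat x(t))\le\gamma$ along the tracking solution.

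\textbf{Robust decay.} For any $x_0\in\ball_{2\eps}(\hat x_k)$ and $t\in[t_k,t_{k+1})$, the definition of $\bar F$ gives $\norm{x_{\tilde u_k}(t;t_k,x_0)-\hat x_k}\le 2\eps+(t-t_k)\bar F\le\bar\eps(\hat x_k)$. This is valid as long as the trajectory stays in $\set V$, which I establish by a continuation argument: the decay keeps $V$ below $\hat V$, so the trajectory cannot leave $\ball_{R^\star}$ before $t_{k+1}$. Every closed-loop state is therefore in $\ball_{\bar\eps(\hat x_k)}(\hat x_k)$. On this ball, the Lipschitz interval bounds of the $\beta_i$, now with radius $\bar\eps(\hat x_k)$ in place of $\eps$, reduce $\varphi(u,\ball_{\bar\eps(\hat x_k)}(\hat x_k))\le0$ to the finitely many vertex inequalities. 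It thus suffices that $\tilde u_k(t)$ satisfies these vertex inequalities for the frozen center $\hat x_k$.

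\textbf{Main obstacle.} The difficulty is that the barrier only guarantees the relaxed inequality $\le\gamma$, not $\le0$, and that $\hat\varphi$ inside $\tilde J$ is written with margin $2\eps$ rather than $\bar\eps(\hat x_k)$. My first approach is to use the slack in Lemma~\ref{thm:choice-of-eps}, which holds with the relaxed rate $\tilde w\le w$, and to choose $\gamma$ no larger than the gap $w-\tilde w\ge\bar w>0$ on $\set V\sm\ball_{r^\star}$ from Assumption~\ref{asm:eps-small-enough}. Then $\hat\varphi\le\gamma$ with respect to $w$ implies a nonpositive decay with respect to $\tilde w$. Interpreting the constraint inside $\tilde J$ over the enlarged ball $\ball_{2\eps+\delta_k\bar F}(\hat x_k)$, as the paper's remark after the definition of $\bar F$ suggests, would then give \eqref{eqn:robust-time-after-measurement-u-tilde} for every initial state in $\ball_{2\eps}(\hat x_k)$.
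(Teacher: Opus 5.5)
Your decomposition matches the paper's proof. The tracking claim is Theorem~1 of \cite{Schmidt2021-tracking} with $\tau=\delta_k$, justified because the affine $\hat\varphi_j$ keep $\tilde J$ strongly convex. The decay claim rests on $2\eps+\delta_k\bar F\le\bar\eps(\hat x_k)$. The paper finishes by asserting $\varphi(\tilde u_k(t),\ball_{\bar\eps(\hat x_k)}(\hat x_k))\le\gamma$ and transferring this to the reachable tube. So the two points you flag as the main obstacle are genuine, and the paper leaves them open too. They are that the barrier in \eqref{eqn:dfn-of-J-tilde-finally} only controls $\hat\varphi(\tilde u_k(t),\hat x(t))$ with a margin of order $\eps$ around the predicted $\hat x(t)$, and only up to $\gamma$.

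The gap is that your repair does not close them.

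\emph{It proves a different lemma.} Evaluating the constraint in $\tilde J$ on $\ball_{\bar\eps(\hat x_k)}(\hat x_k)$ with a frozen center changes \eqref{eqn:opt-prob-finally}. The tracked $u_k^\star$ and the set $\hat\U(\hat x_k)$ are then no longer those of the statement.

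\emph{The device $\gamma\le\bar w$ undermines feasibility.} To absorb $\gamma$, the barrier must contain $w$. But $\varphi(u,x;w)<\gamma$ means $\varphi(u,x;\tilde w)<\gamma-(w-\tilde w)(x)\le 0$ on $\set V\sm\ball_{r^\star}$, i.e.\ \emph{strict} robust $\tilde w$-decay on the whole ball of radius $\bar\eps(\hat x_k)$. Lemma~\ref{thm:choice-of-eps} only gives some $\hat u$ with $\varphi(\hat u,\ball_{\bar\eps(\hat x_k)}(\hat x_k);\tilde w)\le 0$, with no margin. By construction, $\bar\eps(\hat x_k)$ is the critical radius: for $m=1$, $\beta_0,\beta_1>0$ and $\bar\eps(\hat x_k)=E_{01}<E_1$, the robust admissible set in $\U$ is $\{u_{\min}\}$. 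So there need not be any admissible initial value, and the tracking system cannot be started. Your nonemptiness argument in the tracking step concerned the $\ball_{2\eps}$ constraint, where there is slack, and it does not carry over.

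\emph{Two further leaks.} First, $\bar w$ is a minimum over $\set V\sm\ball_{r^\star}$, so the absorption fails wherever the tube meets $\ball_{r^\star}$, where $w-\tilde w$ can be arbitrarily small. Second, the bound via $\bar F$ presumes $\tilde u_k(t)\in\U$, while the barrier only yields the $\gamma$-enlarged box.

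If you keep $\tilde w$ in the enlarged-ball barrier, your argument yields exactly the paper's conclusion, $\le\gamma$ on the tube. To get $\le 0$ you would have to change the construction. For example, remove the $\gamma$-shift from the decay constraint, so that the barrier domain is the strict robust set. Then take the tube radius strictly below $\bar\eps(\hat x_k)$, i.e.\ a shorter $\delta_k$, so that Lemma~\ref{thm:choice-of-eps} makes this set nonempty.
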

\begin{rem}
	The choice $u(t_k) \in \hat{\U}(\hat x_k)$ of the initial value of the tracking system ensures feasibility at $t = t_k$, which is not necessarily fulfilled with $\kappa(\hat x_k)$.
\end{rem}
\begin{proof}
	There exist only two small differences between Lemma \ref{thm:TVOC-under-ME-k} and \cite[Theorem 1]{Schmidt2021-tracking}.
	First, the settling time is chosen as $\tau = \delta_k$ in \eqref{eqn:special-choice-h-delta}, which ensures that the tracking error vanishes after $\delta_k$ seconds.
	Second, the relaxed objective function \eqref{eqn:dfn-of-J-tilde-finally} contains the derived inequalities $\hat \varphi(u, \hat x)$ and $\varphi_\U(u)$.
	Since these inequalities are input-affine, $\tilde J$ is again strongly convex. 
	Additionally, \cite[Theorem 1]{Schmidt2021-tracking} does not depend on a special choice of the relaxed objective function.
	Thus, the proof can be also applied to Lemma \ref{thm:TVOC-under-ME-k}.
	Feasibility of the solution is ensured by the choice of $\delta_k$ to ensure $2\eps + \delta_k \bar F \leq \bar \eps(\hat x_k)$, since $\varphi(\tilde u_k(t), \ball_{\bar \eps(\hat x_k)}(\hat x_k)) \leq \gamma$ implies $\varphi(\tilde u_k(t), x_{u_k}(t; t_k, \ball_{2\eps}(\hat x_k))) \leq \gamma$ for all $t \in [t_k, t_{k+1})$, which was discussed at the end of the last subsection.
	Thus, robust stabilization is given.
\end{proof}
The results of the last sections are used to define the main algorithm.

\subsection{TVOC under measurement errors}

According to Lemma \ref{thm:TVOC-under-ME-k}, the proposed tracking system \eqref{eqn:special-choice-h-delta} ensures a vanishing tracking error after $\delta_k$ seconds. 
Additionally, a decay of all closed-loop trajectories starting in $\ball_{2\eps}(\hat x_k)$ is ensured if $\hat x_k \in \set V \sm \ball_{r^\star}$.
In the following theorem about time-varying optimal control in the presence of measurement errors, Lemma \ref{thm:TVOC-under-ME-k} is combined with the stabilization procedure in Section \ref{sec:prob-setting}.

\begin{thm} \label{thm:TVOC-under-ME}
	Consider the optimal control problem \eqref{eqn:opt-prob-finally} with its solution $u^\star(t)$ that consists of $u_k^\star(t)$ defined on $[t_k, t_{k+1})$ and let Assumptions \ref{asm:J-twice-differentiable-J-uni-str-convex}, \ref{asm:decay-condition}, and \ref{asm:eps-small-enough} hold.
	Let a relaxation of the decay rate be given as $\tilde w$ along with $r$ to obtain the balls in Section \ref{sec:prob-setting}, i.e.,
	\begin{itemize}
		\item Starting ball $\ball_R$, $R := \norm{\hat x_0} + 2\eps$,
		\item Overshoot ball $\ball_{R^\star}$, $R^\star := \alpha_1^{-1}\left( \max_{\norm x \leq R} V(x) \right)$,
		\item Target ball $\ball_r$,
		\item Core ball $\ball_{r^\star}$, $r^\star < \alpha_2^{-1}(\alpha_1(r)) - 2 \eps$.
	\end{itemize}
	Let the triggering rule for a new measurement be defined as
	\begin{equation} \label{eqn:triggering-time-points}
		\delta_k = \begin{cases} \frac{\bar \eps(\hat x_k) - 2 \eps}{\bar F}, & \hat x_k \not\in \ball_{r^\star} \\
		\frac{\alpha_2^{-1}(\alpha_1(r)) - 2 \eps - r^\star}{\bar F_0}, & \hat x_k \in \ball_{r^\star}
		\end{cases}
	\end{equation}
	with $\bar F_0 := \sup_{x \in \set V} \norm{f(x)}$.
	Define the solution of tracking system \eqref{eqn:special-choice-h-delta} as $\tilde u_k(t)$, $t  \in [t_k, t_{k+1})$.
	On the one hand, if the obtained measurement is outside the core ball, applying the solution $\tilde u_k(t)$ for $t \in [t_k, t_{k+1})$ to \eqref{eqn:dynamics-f} ensures that the tracking error vanishes at the end of the settling time and that system \eqref{eqn:dynamics-f} is robustly stabilized in the sense that \eqref{eqn:robust-time-after-measurement-u-tilde} holds until the core ball is reached. 
	On the other hand, if the obtained measurement is inside the core ball, applying $\tilde u_k(t) = 0$ for $t \in [t_k, t_{k+1})$ ensures that the trajectory remains in the target ball.
	\hfill $\square$
\end{thm}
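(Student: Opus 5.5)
The proof splits into two cases, according to whether the measurement $\hat x_k$ lies outside or inside the core ball $\ball_{r^\star}$. An induction over the sampling intervals $[t_k, t_{k+1})$ shows that every measurement stays in $\set V = \ball_{R^\star}$, so the Lipschitz constants $L_i$, the bound $\bar F$ and Lemma~\ref{thm:choice-of-eps} remain valid throughout.

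\emph{Base step and invariance.} At $t_0$ every possible true state lies in $\ball_{2\eps}(\hat x_0) \subset \ball_R$. Hence $V(x(t_0)) \leq \max_{\norm x \leq R} V(x) = \hat V$, and consequently $\norm{x(t_0)} \leq \alpha_1^{-1}(\hat V) = R^\star$. As long as the decay condition holds along all closed-loop trajectories, $V$ is nonincreasing, so these trajectories remain in the sublevel set $\{V \leq \hat V\} \subset \ball_{R^\star}$. This justifies using the compact set $\set V$ in all later estimates.

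\emph{Case $\hat x_k \notin \ball_{r^\star}$.} Assumption~\ref{asm:eps-small-enough} together with Lemma~\ref{thm:choice-of-eps} gives $\bar \eps(\hat x_k) > 2\eps$, so $\delta_k > 0$ and the admissible set $\hat{\U}(\hat x_k)$ is nonempty. The initialization $u(t_k) \in \hat{\U}(\hat x_k)$ in \eqref{eqn:special-choice-h-delta} is therefore possible. Lemma~\ref{thm:TVOC-under-ME-k} then applies directly. It yields $e_{\mathrm T}(t_{k+1}) = 0$ because the settling time is $\tau = \delta_k$, and it yields \eqref{eqn:robust-time-after-measurement-u-tilde} on $[t_k,t_{k+1})$. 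The latter follows because any trajectory starting in $\ball_{2\eps}(\hat x_k)$ moves at most $\delta_k \bar F$ and so stays inside $\ball_{\bar\eps(\hat x_k)}(\hat x_k)$. Thus $V$ decreases with rate at least $\tilde w$ on each such interval, and the invariance argument above carries over to the next measurement. Since $\tilde w$ is bounded below by a positive constant on $\set V \sm \ball_{r^\star}$, the core ball is reached in finite time.

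\emph{Case $\hat x_k \in \ball_{r^\star}$.} The true state then satisfies $\norm{x(t_k)} \leq r^\star + 2\eps$. With $u = 0$ the dynamics reduce to $\dot x = f(x)$, whose speed on $\set V$ is bounded by $\bar F_0$. Over the interval of length $\delta_k$ from \eqref{eqn:triggering-time-points} the state therefore moves at most $\alpha_2^{-1}(\alpha_1(r)) - 2\eps - r^\star$, which gives $\norm{x(t)} \leq \alpha_2^{-1}(\alpha_1(r))$. Applying $V \leq \alpha_2$ yields $V(x(t)) \leq \alpha_1(r)$, and $\alpha_1(\norm x) \leq V(x)$ then gives $\norm{x(t)} \leq r$, so the trajectory remains in $\ball_r$. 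The choice of $r^\star$ ensures that $\delta_k > 0$. At the next measurement the same case analysis applies again: either the state is inside $\ball_{r^\star}$ and we repeat, or it is outside, in which case the first case applies and ensures $V$ does not exceed its current value, which is at most $\alpha_1(r)$.

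\emph{Main obstacle.} The delicate point is this hand-over between the two regimes, namely showing that after leaving the core ball the trajectory cannot escape $\ball_r$. I would handle it by tracking the sublevel set $\{V \leq \alpha_1(r)\}$ rather than Euclidean distances: the second case keeps trajectories inside this set, the first case is nonincreasing in $V$, and $\{V \leq \alpha_1(r)\} \subset \ball_r$. The argument also needs $\ball_r \subset \set V$, i.e.\ $r \le R^\star$, which I would state as an implicit requirement on the user's choice of $r$.
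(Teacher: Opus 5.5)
Your proposal is correct and follows essentially the same route as the paper: Lemma~\ref{thm:TVOC-under-ME-k} handles measurements outside the core ball, the $\bar F_0$ distance bound under $u=0$ handles measurements inside it, and a sublevel-set argument covers the hand-over between the two regimes (your $\{V\le\alpha_1(r)\}\subset\ball_r$ contains the paper's set $\{V\le\max_{\ball_{\tilde r}}V\}$ with $\tilde r=\alpha_2^{-1}(\alpha_1(r))$). Your explicit invariance of $\set V$, the flagged requirement $r\le R^\star$, and the bookkeeping $r^\star+2\eps+\delta_k\bar F_0=\alpha_2^{-1}(\alpha_1(r))$ are tidier than the paper's version; your side remark that the core ball is reached in finite time goes beyond the statement and would need a uniform positive lower bound on $\delta_k$ to be made rigorous.
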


\begin{rem}
	Once the core ball is reached, an arbitrary control, e.g. $u = 0$, can be applied to the system, since there exists not necessarily a stabilizing control.
	Instead of choosing $u = 0$ in $\ball_{r^\star}$, one might, for example, maximize the so-called inter-execution time $\delta_k = t_{k+1} - t_k$ by choosing  
	\begin{equation}
		u_\star := \argmin_{u \in \U} \sup_{x \in \set V} \norm{f(x) + g(x) u}
	\end{equation}	
	to minimize the number of measurements.
	Depending of the definition of the objective function, however, $u = 0$ might yield the smallest value.
\end{rem}

\begin{proof}
	The proof is divided in three parts: The first two parts consider the two cases whether the measurement is located in the core ball or not. The last part combines all time intervals to a solution of the optimal control problem.
			
	\underline{Part 1:} Outside the core ball: \newline
	Due to Lemma \ref{thm:TVOC-under-ME-k}, the decay is ensured based on feasibility of the solution outside the core ball.
	
	\underline{Part 2:} Inside the core ball: \newline
	For a measurement inside the core ball, the sampling time $\delta_k$ has to be chosen such that a new measurement is obtained if $\norm{\hat x_{\tilde u_k}(t; t_k, \ball_{\bar \eps(\hat x_k)}(\hat x_k))} \geq \tilde r := \alpha_2^{-1}(\alpha_1(r))$.
	The respective ball $\ball_{\tilde r - \eps}$ with $\tilde r \in (r^\star, r)$ is called triggering ball.
	It ensures that after a measurement, the closed-loop trajectories remain in the target ball even if they run along the border of the same level set of the CLF (cf. \cite[Figure 4]{Schmidt2024-some}).	
	Along with $\norm{\hat x_k} \leq r^\star \implies \norm{x_k} \leq r^\star + \eps$, it follows that the distance between the core ball and the triggering ball is given as $2 \eps + \delta_k \bar F_0$.
	Based on the choice of the triggering rule \eqref{eqn:triggering-time-points}, i.e., $\delta_k = \frac{\tilde r - 2\eps - r^\star}{\bar F_0}$, the desired result follows.	
	In this case, $u_k^\star := 0$ and $\tilde u_k := 0$ are set since no OCP is given and thus, no tracking system is applied.
	
	\underline{Part 3:} Combination of the procedure: \newline
	Now it is shown that the procedure yields a control that robustly stabilizes the given system by combining the obtained solutions $\tilde u_k$.
	At $t = t_k$, a new measurement $\hat x_k$ is determined. 
	
	Case 1 ($\hat x_k \not\in \ball_{r^\star}$): The inter-execution time $\delta_k$ is computed and a control $\tilde u_k: [t_k, t_{k+1}) \rightarrow \R^m$ is determined as a solution of the tracking system.
	The derived control robustly stabilizes the system in the sense of \eqref{eqn:robust-time-after-measurement-u-tilde} (see Part 1).
	After a measurement at $t = t_{k+1}$, there exist two subcases:
	One could either be located outside the core ball, i.e. $\hat x_{k+1} \not\in \ball_{r^\star}$, which is exactly the same procedure as it was described in Case 1, or a measurement $\hat x_{k+1} \in \ball_{r^\star}$ is obtained, which is given in Case 2.
		
	Case 2 ($\hat x_k \in \ball_{r^\star}$): An arbitrary control is applied such that the closed-loop trajectories $x_0(t_{k+1}; t_k, \ball_{2\eps}(\hat x_k))$ still remain in the triggering ball (see Part 2).
	After a new measurement at $t = t_{k+1}$, there exist, again, two subcases: 
	If $\hat x_{k+1} \in \ball_{r^\star}$, $\tilde u_{k+1} := 0$ is again applied to the system and by definition of $\delta_{k+1}$, a new measurement is obtained before the closed-loop trajectories leave the triggering ball (see Case 2).
	If $\hat x_{k+1} \not\in \ball_{r^\star}$, it holds that $\hat x_{k+1} \in \ball_{\tilde r}$ and thus, the level set $\{ x \in \R^n: V(x) \leq c := \max_{x \in \ball_{\tilde r}} V(x) \}$ is contained in the target ball $\ball_{r}$ and Case 1 appears.
	
	Applying Case 1 until Case 2 appears ensures a convergence to the target ball. 
	The definition of $\delta_k$ and the triggering ball in Case 2 ensure that the closed-loop trajectories remain in the target ball once they entered it.
\end{proof}

Algorithm \ref{alg} summarizes the procedure of Theorem \ref{thm:TVOC-under-ME}.

\begin{algorithm}[H]
	\caption{Stabilization under measurement errors}
	\label{alg}
	\begin{algorithmic}[1]
		\renewcommand{\algorithmicrequire}{\textbf{Input:}}
		\renewcommand{\algorithmicensure}{\textbf{Compute:}}
		\Require System $\dot x = f(x) + g(x) u$, input constraints $\U$, CLF $V$, decay rate $w$ with relaxation $\tilde w$, measurement accuracy $\eps$ (according to Asm. \ref{asm:eps-small-enough}), target ball radius $r$
		\State Measure the initial state $\hat x_0$
		\State Compute radii $R$, $R^\star$, $r^\star$ of the balls based on given $r$
		\While {system is running}
			\State Measure the state to obtain $\hat x_k$
			\If {$\norm{\hat x_k} > r^\star$}
				\State Compute $\delta_k$ based on triggering rule \eqref{eqn:triggering-time-points}
				\State Compute $\beta_i$ and $L_i$, $i = 0, 1, \ldots, m$ on $\set V$
				\State Update $\hat \varphi(u, \hat x)$ in the OCP \eqref{eqn:opt-prob-finally}
				\State Apply $\tilde u_k(t)$ given from \eqref{eqn:special-choice-h-delta} for $\delta_k$ seconds
			\Else
				\State Compute $\delta_k$ based on triggering rule \eqref{eqn:triggering-time-points}
				\State Apply arbitrary control, e.g., $\tilde u_k = 0$ for $\delta_k$ seconds
			\EndIf
		\EndWhile
	\end{algorithmic}
\end{algorithm}

\begin{rem}
	If a local bound of the inter-execution times $\delta_k$ can be determined, a fixed settling time in tracking system \eqref{eqn:special-choice-h-delta} can be chosen based on this bound.
\end{rem}

\section{Case study} \label{sec:case-study}

\subsection{Train acceleration model}

Consider a simplified train motion model \cite{Hauck2023-map, Schmidt2024-some}
\begin{equation}
	\label{eq:model_train}
	\dot x = - \frac{F_\text{res}(x)}{m} + \frac{F_\text{train}(x)}{m} u,
\end{equation}
resulting from a balance of force equation with $m$ as the mass of the train and $x$ as its velocity.
The velocity can be controlled using a driving lever, whose position is assumed to be bounded between $[-1,1]$, by convention, where a positive value results in acceleration and a negative one in braking.
The resistance force consists of air resistance, inclination resistance, and rolling-resistance and can be expressed as
\begin{equation}\label{eq:F_resistance}
	F_\text{res}(x) = p (x - v_\text{W})^2 + q
\end{equation}
with parameters $p, q > 0$.
The train-specific driving power is given as
\begin{equation}
	F_\text{train}(x) =  k_1 e^{- k_2 x} + k_3
\end{equation}
with $k_i > 0$.
The parameters are given as $p = \SI{5.18}{\kilogram / \meter}$, $q = \SI{13046.32}{\newton}$, $m = \SI{68200}{\kilogram}$, $k_1 = \SI{1.516e+05}{\kilogram \meter \per \second^2}$, $k_2 = \SI{0.1147}{1 \per \second}$, $k_3 = \SI{1.564e+04}{\kilogram \meter \per \second^2}$.

It is desired to accelerate the train from $\hat x_0 = \SI{27}{\meter / \second}$ to $x^\star = \SI{30}{\meter / \second}$ at an inclination of one degree with wind speed $v_\text{W} = \SI{5}{\meter / \second}$ while minimizing the driving lever position, i.e., $J(u,x) = \frac 1 2 u^2$.

A crucial part is the weighting of the inequalities in \eqref{eqn:dfn-of-J-tilde-finally}.
Therefore, the relaxed objective function is replaced by
\begin{equation} \label{eqn:dfn-of-J-tilde-example}
	\tilde J(u,\hat x,t) := J(u,\hat x) + \mu(t)^\top B \left( W \cdot \begin{pmatrix}
		\hat \varphi(u, \hat x) \\
		\varphi_\U(u)
	\end{pmatrix} - \gamma \right).
\end{equation}
with a weighting matrix $W = \text{diag}(w_1, w_2, \ldots, w_{2^{m+1}+2m})$.
These weights along with the functions $\mu_i$ have a large influence on the solution.
On the one hand, the term $B(u - u_{\max} - \gamma)$ punishes high $u$.
On the other hand, the decay conditions $\hat \varphi(u, \hat x) \leq 0$ require a high $u$ for a large decay and thus, a fast convergence to the desired velocity $x^\star$.
Additionally, the inequalities in the barrier functions are of different order of magnitude, such that their scaling is also a possibility to influence the decay.
In this example, these parameters were chosen as follows: $W = \text{diag}(3, 3, 3, 3, 1, 1)$ and $\mu_i(t) = \exp(- 0.5 \cdot t)$.
The parameter in the exponential function describes how long the inequalities have to be considered.

Using the quadratic CLF $V(x) = \frac 1 2 x^2$ and the stabilizing controller
\begin{equation}
	\kappa(x) = - \text{tanh}(x - \text{arctanh}( u^\star ) - x^\star)
\end{equation}
with $u^\star = \frac{ F_\text{res}(x^\star)}{F_\text{train}(x^\star)}$
results in the decay
\begin{equation}
	\dot V = \scal{\nabla_x V, f + g \kappa} \leq - 0.025(x - x^\star)^2 =: - w(x)
\end{equation}
for $x < 40$, i.e., for velocities up to $\SI{40}{\meter / \second}$, which is the physical bound for the given train.
The relaxed decay is chosen as $\tilde w = 0.015(x - x^\star)^2$.
The measurement error is chosen as $\eps = 0.01$ and the radii of the balls are given as $r = 1$ (target ball), $\tilde r = 0.7$ (triggering ball), and $r^\star = 0.5$ (core ball).
Additionally, $\gamma = 0.01$ and $B(z) = -z^{-1}$ are chosen in the relaxed objective function.

The convergence of the closed-loop trajectories with the initial state $x_0$ and its measurement $\hat x_0$ is shown in Figure~\ref{fig:case_study_1_results}.
The lower picture shows that both trajectories remain in the magenta target ball after entering it once, which is ensured by the triggering condition.
It switches between a stabilizing control given from the tracking system \eqref{eqn:special-choice-h-delta} until the cyan core ball is reached, and an arbitrary control (here $u = 0$) which is applied as long as the trajectories are located in the triggering ball.

\begin{figure}
	\centering
	\includegraphics[trim={0.5cm 0.5cm 0cm 0cm},width = 0.47\textwidth]{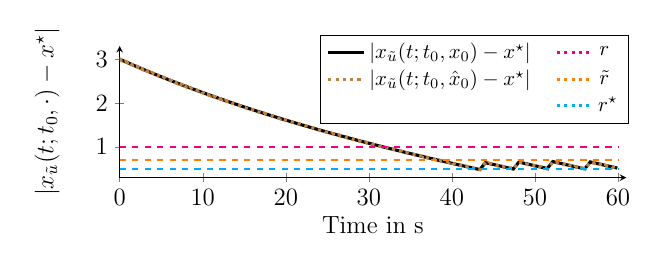}	
	\includegraphics[trim={0.5cm 0.5cm 0cm 0cm},width = 0.47\textwidth]{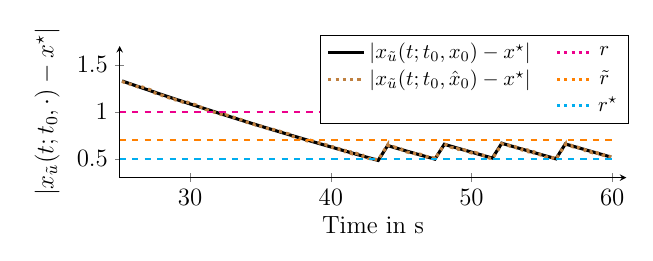}
	\caption{Upper picture: Convergence of the closed-loop trajectories into the target ball.	
	Lower picture: Closer look of the upper figure for $t \in [25,60]$.}
	\label{fig:case_study_1_results}
\end{figure}

Using the proposed triggering condition, a Zeno behavior, i.e. an infinite number of events that occur in a finite-time interval, can be excluded by considering the lower bounds of the terms included in \eqref{eqn:triggering-time-points}.
In the case of a measurement outside the core ball, there exists a lower bound for $\bar \eps(\hat x_k)$ given as $\eps_{\max}$ in \eqref{eqn:lower-bound-bar-eps}.
Since this bound is quite conservative, $\bar \eps(\hat x_k) - 2 \eps$ is not close to zero.
Thus, the same holds also for $\delta_k$.
In the case of a measurement inside the core ball, the lower bound for $\delta_k$ depends on the choice of $r^\star$, such that a Zeno behavior can be avoided by a suitable choice of the radius of the core ball as well as the two functions $\alpha_1$ and $\alpha_2$.

Based on \eqref{eqn:lower-bound-bar-eps}, the upper bound for the measurement error is computed as $\eps_{\max} = 1.7 \cdot 10^{-3}$.
Considering the theoretical results, stabilization is ensured for all measurement errors that are smaller than than $\eps_{\max}$.
Since this bound is only a sufficient condition, the measurement error was chosen as $\eps = 10^{-2}$ in this example.

Beside the the two functions $\alpha_1$ and $\alpha_2$, also other parameters can be changed that have a significant influence on the result.
Here, $\tilde w = 0.6 w$ and $r^\star = 0.5$ were chosen, which results in $\eps_{\max} = 1.7 \cdot 10^{-3}$.
If the  radii of the triggering ball and the core ball are halved, the sufficient bound for the measurement error is given as $\eps_{\max} = 0.43 \cdot 10^{-3}$, which reduces the maximum measurement error significantly.
Reducing these radii is, for example, required if a smaller radius of the target ball is desired, meaning that the deviation of the desired velocity of the train is smaller.
Note that $\eps_{\max}$ is not directly influenced by the choice of $r$, but only through $r^\star$, and thus indirectly by $r$.
As a possibility to increase the bound $\eps_{\max}$, the decay can be relaxed to $\tilde w = 0.3 w$.
It results in $\eps_{\max} = 3 \cdot 10^{-3}$, meaning that larger measurement errors are allowed since a smaller decay is required.
These values give only a small insight how the different parameters influence each other.

\subsection{Lotka Volterra model}

Consider the Lotka-Volterra predator-prey model as
\begin{equation} \label{eqn:Lotka-Volterra}
\frac{\mathrm d}{\mathrm d t}
\begin{pmatrix}
         x_1\\
         x_2 \\
       \end{pmatrix}=\underbrace{\begin{pmatrix}
         \bar \alpha x_1 - \bar \beta x_1 x_2 \\
         -\bar \gamma x_2 + \bar \delta x_1 x_2 \\
       \end{pmatrix}}_{= f(x)}+\underbrace{\begin{pmatrix}
         x_1 & 0\\
         0 & x_2 \\
       \end{pmatrix}}_{=g(x)}\begin{pmatrix}
         u_1\\
         u_2 \\
       \end{pmatrix},
\end{equation}
with additional controls $u_1, u_2$ as an input.
The model describes the interactions between a prey $x_1$ as the first species and a predator $x_2$ as the second species.
The parameters in \eqref{eqn:Lotka-Volterra} are chosen as $\bar \alpha=1.1$, $\bar \beta=0.4$, $\bar \gamma=0.4$, and $\bar \delta=0.1$ and they are known as the prey growth rate, prey death rate, predator death rate, and predator growth rate.
The aim is to stabilize \eqref{eqn:Lotka-Volterra} at $x^\star=\begin{pmatrix} 10 & 4 \end{pmatrix}^\top$.
The objective function is given as $J(u,x) = \frac 1 2 x^\top \begin{pmatrix} 1 & 0 \\ 0 & 3 \end{pmatrix} x$ meaning that it is more costly to reduce the amount of predators than prey.

A CLF for \eqref{eqn:Lotka-Volterra} is given as \cite{Meza2005-controller}: 
\begin{equation}
	V(x) =x_1 - x_1^\star - x_1^\star \log\left(\frac{x_1}{x_1^\star}\right) + x_2 - x_2^\star - x_2^\star \log\left(\frac{x_2}{x_2^\star}\right),
\end{equation}
which satisfies $V(x^\star) = 0$ and $V(x) > 0, \forall x \in \R_{> 0}^2 \sm \{ x^\star \}$.
The CLF allows to derive a feasible control according to Assumption \ref{asm:decay-condition} as
\begin{equation} \label{eqn:kappa-lotka-volterra}
		\kappa(x) = \begin{pmatrix}
			- \bar \alpha + \bar \beta x_2 + \tanh( - (x_1 - x_1^\star)) \\
			\bar \gamma - \bar \delta x_1 + \tanh( - (x_2 - x_2^\star))
		\end{pmatrix}.
\end{equation}
Using $\kappa(x)$, the decay rate is given as
\begin{equation} \label{eqn:derivative-CLF-Lotka-Volterra}
	\begin{aligned}
		\dot V(x) =& -\frac 1 2  (x_1 - x_1^\star) \tanh( - (x_1 - x_1^\star)) \\
		&+ \frac 1 2 (x_2 - x_2^\star) \tanh( - (x_2 - x_2^\star)) =: -w(x).
	\end{aligned}
\end{equation}
The relaxed decay rate is chosen as $\tilde w := \frac 1 2 w$.
In \cite{Schmidt2024-some}, bounds for the input are determined based on knowledge of the system as $\kappa(x) \in [-3, 4] \times [-3, 2]$, which yields $\U$.
The procedure is considered for a set of initial values $x_0 \in \set X_0$, namely 
\begin{equation}
		\set X_0 := \left\{ \begin{pmatrix} 5 \\ 8 \end{pmatrix}, \begin{pmatrix} 10 \\ 6 \end{pmatrix}, \begin{pmatrix} 15 \\ 4 \end{pmatrix}, \begin{pmatrix} 10 \\ 3 \end{pmatrix}, \begin{pmatrix} 5 \\ 2 \end{pmatrix}, \begin{pmatrix} 1 \\ 3 \end{pmatrix}, \begin{pmatrix} 1 \\ 5 \end{pmatrix} \right\}.
\end{equation}

The radii of the balls are given as $r = 0.7$ (target ball), $\tilde r = 0.3$ (triggering ball), and $r^\star = 0.2$ (core ball).
The measurement error is chosen as $\eps = 10^{-2}$.

Figure~\ref{fig:case_study_results} shows the closed-loop trajectories starting at the seven different initial values $x_0 \in \set X_0$.
All of these trajectories converge into the magenta target ball.
Stabilization using tracking system \eqref{eqn:special-choice-h-delta} is ensured as long as a measurement outside the cyan core ball is taken.
If a measurement inside the core ball is obtained, $u = \begin{pmatrix} 0 & 0 \end{pmatrix}^\top$ is applied.
It can be seen that the triggering condition in these cases is conservative, since the closed-loop trajectories do not diverge that fast to the target ball.
Additionally, the choice of the objective function can be seen in Figure~\ref{fig:case_study_results}.
Since it is more costly to remove predators $x_2$ from the population, the optimal strategy involves removing prey $x_1$ first.

\begin{figure}   
        \centering
       \hspace*{1cm}
       \includegraphics[trim={2cm 1.6cm 0.5cm 0cm},clip,width = 0.35\textwidth]{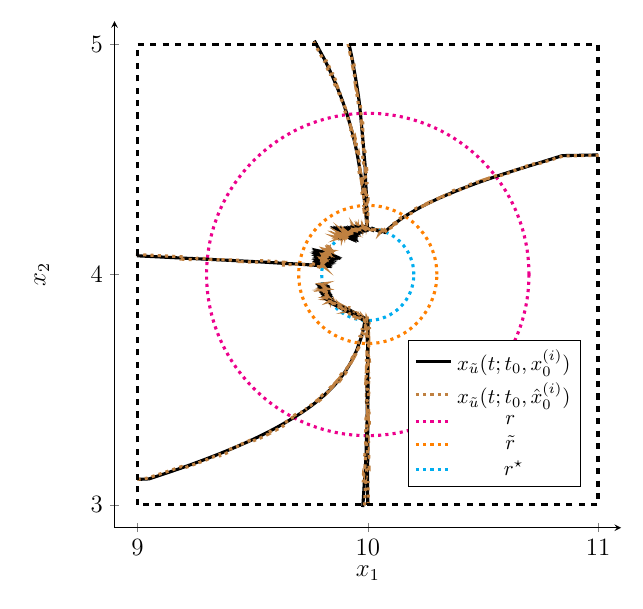}
       \includegraphics[trim={0cm 0.5cm 0cm 0.5cm},width = 0.47\textwidth]{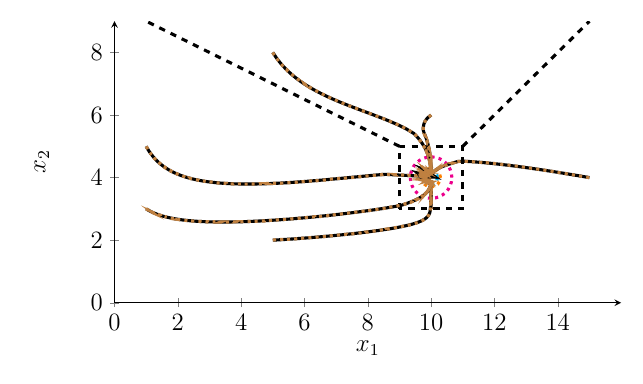}
        \caption{The closed-loop trajectories starting in the seven different initial values $x_0^{(i)} \in \set X_0$. All closed-loop trajectories starting in the real values (black solid lines) and in the measured values (brown dotted lines) remain in the target ball after entering it once.}
        \label{fig:case_study_results}
\end{figure}

\section{Conclusion} \label{sec:concl}

The paper discussed time-varying optimal control (TVOC) as a technique to enhance an existing stabilizing controller based on a user-defined objective function in case of measurement errors.
The paper described how measurement errors affect the set of admissible controls and how this set shrinks over time.
These results enabled the derivation of a triggering condition for the next required measurement, namely when no control exists that stabilizes all states located in a vicinity around the measured state.
A tracking system was derived that determines the optimal solution until the end of the sampling period without solving the optimal control problem.
The main theorem showed that the proposed algorithm practically robustly stabilizes the given input-affine system meaning that the closed-loop trajectories remain near the origin. 
The approach was applied to two systems in a case study.

\section*{Acknowledgment}
This research was funded by the German Federal Ministry of Research, Technology and Space (BMFTR) in the frame of the SRCC$\_$5D.Rail project, grant number 03WIR1225A.

\bibliographystyle{plain}
\bibliography{constructing-LFs,discont-DE,stabilization,stability,non-smooth-analysis,sliding-mode,MPC,opt-ctrl,dyn-sys,semiconcave,time-varying-optimization,triggered,viability-kernel,schmidt,PDE,nonlin-ctrl}

\appendix

\section{Required measurement accuracy}

The required measurement accuracy (see Lemma \ref{thm:choice-of-eps}) was computed in \cite{Schmidt2024-some} and is given as
\begin{equation}
	\bar \eps(x) = \begin{cases} 
		\bar \eps_0, & \text{ if } \beta_0(x) \leq 0 \land \beta_1(x) = 0 \\
		\bar \eps_1, & \text{ if } \beta_0(x) > 0 \land \beta_1(x) \not= 0 \\
		\min\{\bar \eps_0, \bar \eps_1\}, & \text{ else }
	\end{cases}
\end{equation}
with
\begin{equation}
	\bar \eps_0(x) = - \frac{\beta_0(x)}{L_0}
\end{equation}
and for $m = 1$
\begin{equation}
	\bar \eps_1 := \min \left\{ E_1, E_{01} \right\}
\end{equation}
with
\begin{equation}
	E_1 := \frac{\abs{\beta_1}}{L_1}, E_{01} := - \frac{\beta_0 + \beta_1 \cdot \begin{cases} u_{\min}, & \beta_1 > 0 \\ u_{\max}, & \beta_1 < 0 \end{cases}}{L_0 + L_1 \cdot \begin{cases} \abs{u_{\min}}, & \beta_1 > 0 \\ u_{\max}, & \beta_1 < 0 \end{cases}}.		
\end{equation}		
or, respectively for $m = 2$,
\begin{equation}
	\bar \eps_1 = \max \left\{ \bar E_{012}, \bar E_{0i} \right\}
\end{equation}
with $\bar E_{012} := \min \left\{ E_1, E_2, E_{012} \right\}$, $\bar E_{0i} := \min \left\{ E_i, E_{0i} \right\}$, for those $i \in \set I_2$, where the index set
\begin{equation} \label{eqn:index-set}
	\set I_2 := \left\{ i \in \{1,2\}: \begin{cases} \beta_0 + \beta_i u_{i,\min} \leq 0, & \text{if } \beta_i > 0 \\ \beta_0 + \beta_i u_{i,\max} \leq 0, & \text{if } \beta_i < 0 \end{cases} \right\}
\end{equation}
and 
\begin{equation} \label{eqn:definition-E012}
	E_{012} = - \frac{\beta_0 + \sum \limits_{i = 1}^2 \beta_i \cdot \begin{cases} u_{i,\min}, & \beta_i > 0 \\ u_{i,\max}, & \beta_i < 0 \end{cases}}{L_0 + \sum \limits_{i = 1}^2 L_i \cdot \begin{cases} \abs{u_{i,\min}}, & \beta_i > 0 \\ u_{i,\max}, & \beta_i < 0 \end{cases}}.
\end{equation}
\end{document}